\documentclass[journal]{IEEEtran}

\usepackage{epsf}
\usepackage{epstopdf}
\usepackage{cite}
\usepackage{array}
\usepackage{fixltx2e}
\usepackage{float}
\usepackage[cmex10]{amsmath}
\usepackage{amsmath,amsfonts,amssymb}
\usepackage{graphicx,graphics}
\usepackage{enumerate}
\usepackage{color}
\usepackage{verbatim}
\usepackage{amsthm}
\usepackage{multirow}
\usepackage{siunitx}
\usepackage[table,xcdraw]{xcolor}

\usepackage[english]{babel}



\begin{document}

\title{A Multiband OFDMA Heterogeneous Network for Millimeter Wave 5G  Wireless Applications}

\author{Solmaz~Niknam,~\IEEEmembership{Student Member,~IEEE,}
        Ali~A.~Nasir,~\IEEEmembership{Member,~IEEE}
        ~Hani~Mehrpouyan,~\IEEEmembership{Member,~IEEE}
        and Balasubramaniam~Natarajan,~\IEEEmembership{Senior Member,~IEEE.}}

\maketitle

\let\thefootnote\relax\footnote{This work is supported by National Science Foundation (NSF) grant on Enhancing Access to Radio Spectrum (EARS).}
\begin{abstract}
Emerging fifth generation (5G) wireless networks require massive bandwidth in higher frequency bands, extreme network densities, and flexibility of supporting multiple wireless technologies in order to provide higher data rates and seamless coverage. It is expected that utilization of the large bandwidth in the millimeter-wave (mmWave) band and deployment of heterogeneous networks (HetNets) will help address the data rate requirements of 5G networks. However, high pathloss and shadowing in the mmWave frequency band, strong interference in the HetNets due to massive network densification, and coordination of various air interfaces are challenges that must be addressed. In this paper, we consider a relay-based multiband orthogonal frequency division multiple access (OFDMA) HetNet in which mmWave small cells are deployed within the service area of macro cells. Specifically, we attempt to exploit the distinct propagation characteristics of mmWave bands (i.e., $60$ GHz --the \textit{V-band}-- and $70$--$80$ GHz --the \textit{E-band}--) and the Long Term Evolution (LTE) band to maximize overall data rate of the network via efficient resource allocation. The problem is solved using a modified dual decomposition approach and then a low complexity greedy solution based on iterative activity selection algorithm is presented. Simulation results show that the proposed approach outperforms conventional schemes.
\end{abstract}

\begin{IEEEkeywords}
Heterogeneous Networks, millimeter-wave band, multiband, 5G, resource allocation.
\end{IEEEkeywords}
\section{Introduction} \label{introduction}
The ubiquity of cloud-based applications, ultra-high resolution video streaming, entertainment and many new emerging applications have created an increasing demand for higher data rate in wireless cellular networks. Fifth generation (5G) wireless networks are a solution to that demand because of their promising ability to handle sheer amount of data.
In order to satisfy the data rate demand, the large available bandwidth in millimeter-wave (mmWave) bands is a promising candidate\cite{Rappaport2014Millimeter}. Specifically, the 60 GHz unlicensed frequency band is a valuable frequency resource for offloading traffic from licensed bands. Therefore, 5G networks must be designed to utilize the new frequency bands as well as coexisting and integrating with other radio access technologies  \cite{Andrew2014what}.

Extreme densification is another key enabling technology for increasing capacity that can be realized by deploying heterogeneous networks (HetNets). HetNets with low-power, low-complexity base stations (BSs) such as pico and femto, referred to as small cells (SCs), coupled with conventional macro BSs can potentially improve the overall throughput of cellular networks \cite{andrews2013seven}. Based on Nokia estimates, integrating mmWave SC with 2 GHz carrier bandwidth with macro BS that utilizes current technology such as Long Term Evolution (LTE)-advanced not only can provide a 2 ($Tb/s/km^2$) area capacity, but can also lower the required density for SC deployment \cite{Nokia2015ten}.

One way to enhance system throughput is to incorporate SCs that operate in licensed and unlicensed mmWave band in the coverage area of conventional macrocells. Base stations of SCs can perform a switching procedure between licensed and unlicensed mmWave bands using the transceiver structure in \cite{Hani2014hybrid}. Such a utilization of cells that support licensed and unlicensed mmWave bands and LTE frequencies allow maximum throughput through switching and aggregation and alleviate traffic congestion via offloading.

Advantageous utilization of the two aforementioned enabling technologies necessitates a proper model of network elements and radio resource incorporation within the advanced architecture of HetNets. Specifically, development of a scheme that efficiently utilizes radio resource, including bandwidth and transmission power, and allows switching between different technologies while mitigating intercell and interuser interference is a critical issue \cite{Nokia2015ten}.
Hence, many works have been carried out in this regard. However, the problem of considering frequency bands with \emph{completely different propagation} characteristics to enhance throughput and reduce interference, a huge limiting factor in current wireless cellular networks, is a novel concept. This is especially the case for mmWave frequencies, where the V-band has completely different propagation characteristics compared to E-band and they both differ greatly with respect to lower LTE frequency band (see Fig.~\ref{figure.fig lossVSfreq3}). In this paper, we demonstrate this concept in the context of relay-based multiband orthogonal frequency division multiple access (OFDMA) network. Prior efforts related to resource allocation in OFDMA HetNets have mainly focused on single frequency band operations. This limits their applicability to the proposed model as detailed in the next subsection.

\subsection{Related works} \label{subsec:Related works}
A resource allocation scheme for heterogeneous OFDMA systems with relays has been proposed in~\cite{Xiao2011resource}. However, the problems of relay selection, subcarrier and power allocation are solved separately through disjoint steps which may not be a suitable solution; since, subcarrier allocation is greatly affected by power allocation. \cite{Fooladivanda2013joint} considers a joint resource allocation for HetNets but the total power budget is equally shared among all allocated channels. This can be a limitation for mmWave systems because channel conditions for users can differ considerably as a result of significant human shadowing and pathloss~\cite{Rappaport-14-B}. A resource allocation scheme for OFDMA HetNets is presented in~\cite{Abdelnasser2013subchannel} to maximize network throughput. However, the users are assigned to specific cells without freedom to select the appropriate serving cell. \cite{liang2012game} investigates a game-theoretic resource allocation hierarchy for heterogeneous relay networks, but relay selection is not considered even though it can potentially provide additional degrees of freedom to overcome pathloss and shadowing. In addition to the above approaches and while not considering HetNets, many research articles have considered the important scenario of joint resource allocation in OFDM relay systems~\cite{hsu2011joint, hasan2011resource, sidhu2010joint} but these approaches do not take into account propagation characteristics at mmWave frequencies and are designed to utilize only a single frequency band for communication. ~\cite{hsu2011joint} considers a relay system consisting of one source, one relay, and one destination using a single microwave frequency band, which does not take the benefits of relays and band selection into account. In addition, a minimum rate requirement for each user is not taken into account. Therefore, the work in~\cite{hsu2011joint} may not be well-suited for applications in mmWave systems that must overcome significant propagation issues. A downlink point-to-multipoint wireless network in which the network consists of one source or BS, one relay, and multiple users has been taken into account in~\cite{hasan2011resource}. However, since pathloss and shadowing significantly affect mmWave networks, a single relay may not be able to support multiple users. In addition, the algorithm proposed in~\cite{hasan2011resource} only supports a single band, does not consider subcarrier pairing, and sets no constraints on ensuring a minimum rate for a given user. These omissions may result in loss of communication links, inefficient use of bandwidth, and poor quality of service. Reference~\cite{sidhu2010joint} considers an uplink multi-user transmission scenario in which multiple mobile users communicate with a BS through various relay stations. However in~\cite{sidhu2010joint}, the direct link, or the valuable communication link in the mmWave band is not considered and the authors adopt a high signal-to-noise ratio (SNR) regime to convert the resource allocation problem into a concave optimization problem. The resulting solution is only precise at high SNRs, which may be rarely the case in mmWave systems that suffer from significant pathloss and shadowing. It is worth reiterating that none of the mentioned works utilize multiple frequency bands with very different propagation characteristics and they are all designed to operate in a single frequency band.
\begin{figure}[t]
\centering
    \includegraphics[scale=0.4]{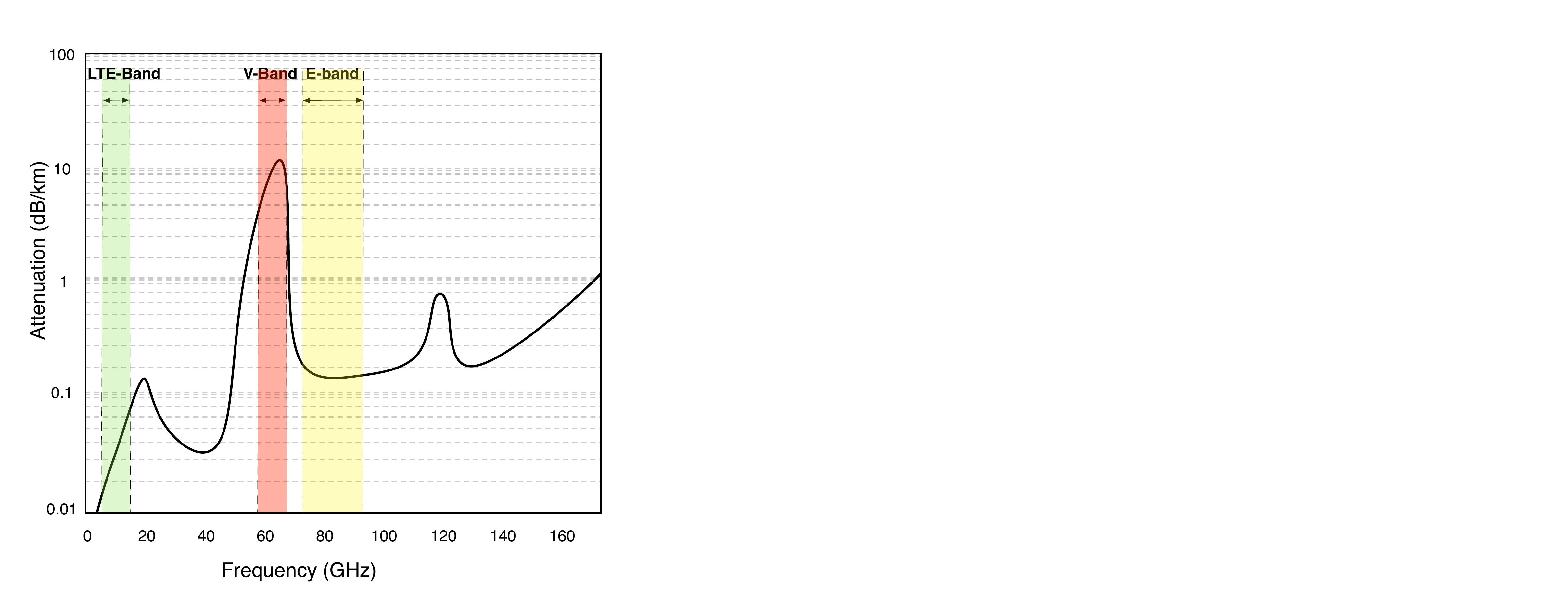}
    \vspace{-5mm}
  \caption{Atmospheric attenuation versus operating frequency~\cite{Rappaport-14-B}.} \label{figure.fig lossVSfreq3}
 \vspace{-1mm}
\end{figure}
\vspace{-2mm}
\subsection{Contributions} \label{subsec:Contributions}
In this paper, we propose a new mmWave HetNet model that uses the mmWave and LTE bands to maximize the overall throughput of the network while meeting power constraints and quality of service (QoS) of each user. Incorporating the LTE band in the model is especially important since mmWave signals are significantly attenuated by atmospheric absorbtion, as shown in Fig.~\ref{figure.fig lossVSfreq3}. In this model, each macro cell contains a macro BS that operates in mmWave and LTE frequency bands and several small BS that operate only in mmWave frequency band serving the outdoor and indoor users, respectively (see Fig.~\ref{figure.fig HetNet}). Confining the small BSs to solely operate in mmWave frequency bands helps reduce interference without sophisticated intercell and interuser interference cancellation techniques. In fact, the strong attenuation and penetration loss of mmWave signals decreases the probability that an outdoor user will be covered by a small BS~\cite{ heath2015coverage }. This phenomenon also results in negligible interference between a macro BS and small cell users and vice versa. Therefore, this model allows the use of the frequencies in the LTE band with favorable propagation characteristics to meet the users' QoS requirements, while the frequencies in the mmWave band are used to reduce the overall level of interference in the network and achieve higher throughputs. The other important advantages of the proposed model can be summarized as follows:
\begin{itemize}
\item By allowing users to adaptively switch between different frequency bands, the model enables the combination of dual air interfaces (i.e., small BSs with licensed and unlicensed mmWave air interface are coupled with macro BS utilizing mmWave and LTE air interface), leading to increased throughput.

\item To support dense networks in a reliable fashion, 5G wireless networks will more extensively use relays~\cite{Osseiran2014Scenarios}. Therefore, we incorporate relays into our model and consider a two-hop (from the BS to the relays and from relays to the users) transmission scheme. Moreover, relays can overcome the severe pathloss and shadowing issue in the mmWave band, especially at 60 GHz, as shown in Fig.~\ref{figure.fig lossVSfreq3}.

\item A new degree of freedom in which a subcarrier applied in the second hop may differ from the one in the first hop, is also considered. This technique, called subcarrier pairing, improves the system sum-rate~\cite{herdin2006chunk} and can be combined with frequency band selection to enhance coverage, especially at mmWave frequencies. For example, if a relay-to-user link distance is longer than the BS-to-relay link and cannot be adequately supported by the V-band (see Fig.~\ref{figure.fig lossVSfreq3}), subcarrier pairing allows for the use of an E-band subcarrier (in the case of a macro-cell base station, the proposed approach can select subcarriers in the E-band or LTE frequency band) with more suitable propagation characteristics to support this link. Our results show that this approach increases the overall sum rate of the network.
\end{itemize}

In this new relay-based multiband OFDMA model, we analytically formulate the optimal resource allocation strategy that maximizes the overall throughput subject to a power constraint and users' QoS requirements. The resulting optimization problem is solved using the dual decomposition approach. Finally, by using the sub-gradient method and an iterative algorithm, the joint resource allocation problem is solved and a suboptimal low complexity greedy solution is presented, as well. Simulations show that the use of smaller cells, relays, and distinct \textit{propagation} characteristics at LTE, V-band and E-band allows the proposed model to overcome large pathloss and shadowing at mmWave frequencies and achieve significantly high data rates.
\begin{figure}[t]
\centering
\includegraphics[scale=0.45]{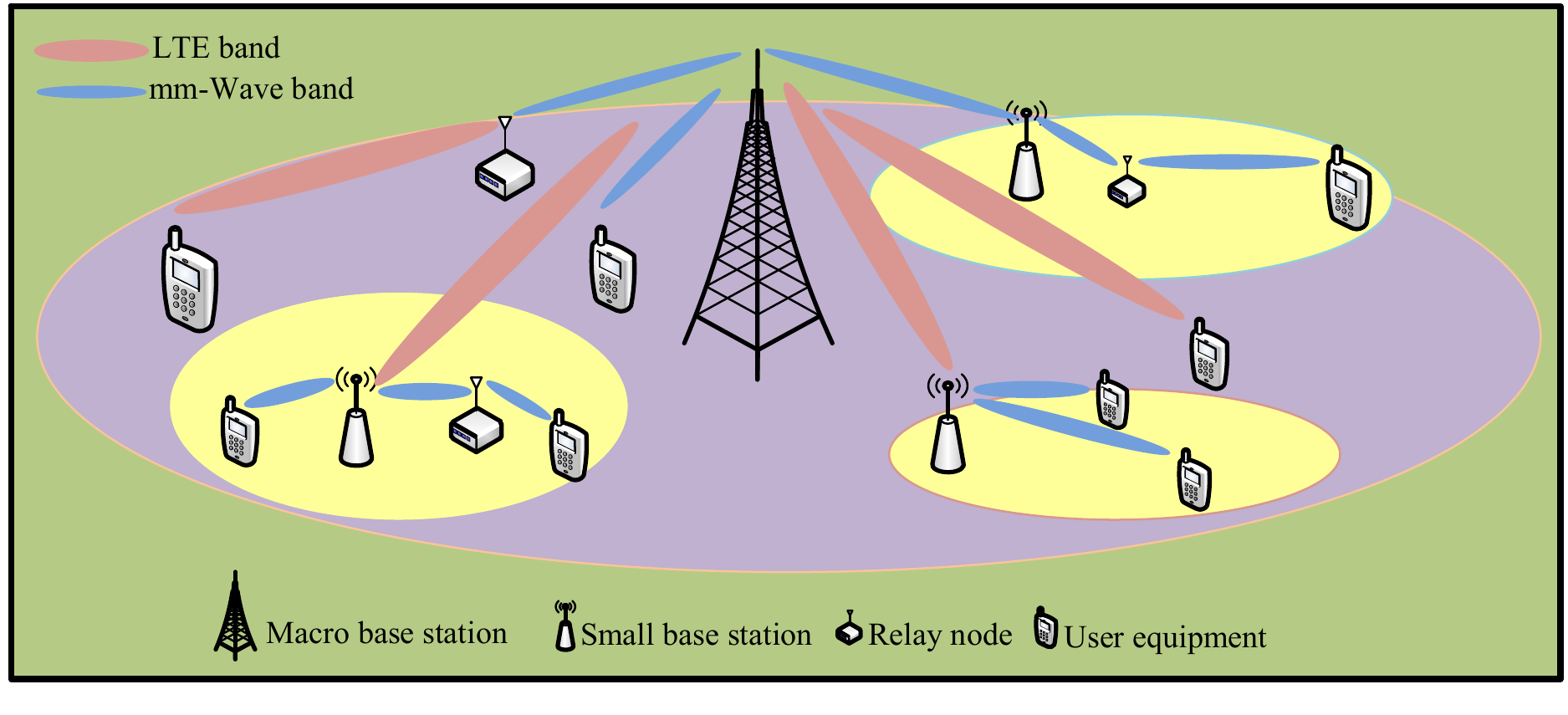}
\caption{Single cell of the HetNet.} \label{figure.fig HetNet}
\end{figure}

\subsection{Organization}
The remainder of this paper is organized as follows. Section~\ref{Proposed Scheme} describes the proposed relay-based OFDMA model and its operation; Section~\ref{Resource Allocation for Hybrid HetNet} formulates the proposed joint resource allocation problem and presents the modified dual decomposition based solution. Sections~\ref{Simulation Result} and~\ref{Conclusion} present simulation results and the conclusion, respectively.\\

\section{Proposed Model} \label{Proposed Scheme}
Comprehensive deployment of 5G networks operating in the V-band face substantial obstacles including high pathloss and significant signal attenuation due to shadowing~\cite{Rappaport2014Millimeter}. Unlike the V-band that suffers from strong gaseous attenuation, the E-band has a large available spectrum in order to support transmissions over longer distances. Our model is an OFDMA HetNet that operates in LTE, V-band and E-band, as shown in Fig.~\ref{figure.fig HetNet}. It consists of several service areas, including a macro BS coupled with $\mathcal{L}$ small BSs $\{{f_0},{f_1},...,{f_l},...,{f_\mathcal{L}}\}$, where $f_0$ is the macro BS that is connected to the small BSs through backhaul links. The BSs mainly attempt to communicate over V-band frequencies but to maintain an acceptable quality of service at low SNR, E-band frequencies can be also used. The macro BS employs the LTE frequency band in addition to mmWave frequency bands in order to serve distant outdoor users for whom mmWave signaling is inefficient due to its strong attenuation. Significant penetration loss at mmWave frequencies ensures that the interference introduced from macro BS to indoor users and from small BS to outdoor users is negligible. In addition, strong channel attenuation and deployment of a highly directional antenna in mmWave systems diminish the effect of interference in the HetNet.

The proposed model assumes that several relay terminals assist in carrying information from BS to the users in order to overcome shadowing by establishing a long-distance connection. Based on channel conditions at different frequency bands, there are two modes of operation: \textit{relay link mode} and \textit{direct link mode}. The BS decides to either utilize the relays or transmit the information through the direct link. If the BS-to-relay and relay-to-user links are favorable compared to the BS-to-user link, then the relay link is used. Otherwise, BS-to-user link is employed. In the first hop, the BS broadcast data to the users and relays over a given subcarrier. Then, in the relay link mode, the relays decode and forward the received data to the users  in the second hop over a subcarrier, which may differ from the subcarrier in the first hop. This subcarrier pairing is performed based on the sum-weighted rate maximization criterion. In fact, the subcarriers in the first and second hops can be in the LTE, E-band and V-band, depending on channel conditions. If the direct link mode is selected, the relays do not forward the received information to the users, and the BSs transmit the data solely through the direct link in the first hop and keep silent in the second hop. A particular relay node can serve more than one user, and a user can receive data from different relays.

Similar to~\cite{hsu2011joint,hasan2011resource,sidhu2010joint}, we assume the network operates in time-division duplex mode, so channel reciprocity is used to estimate all channel parameters at the macro BS. The macro BS is also assumed to perform the resource allocation centrally and provide information to the small BSs using backhaul links that interconnect them to the entire network. This allows simple hardware structures at the small BSs, thereby reducing their cost of deployment. Subsequently, users and relays are informed of the resource allocation parameters during the signaling process that precedes data transmission (i.e., the hand-shaking stage). Due to the fact that subcarriers may carry different types of services, various weights are applied to individual users in the system, determined based on requested service priority, to meet the QoS requirements for each user. The respective resource allocation scheme for this model is provided in the following section.

\vspace{+2mm}
\section{Resource Allocation in the Proposed Model} \label{Resource Allocation for Hybrid HetNet}
In cell $l$ for a user $k_l$ that receives data through the BS and a relay $m_l$ in the first and second hops over subcarrier pair $(i_b,j_{b'})$, the weighted normalized rate can be expressed as \cite{wang2007power}
\begin{equation} \label{equation.equ weighted-normalized-rate}
\mathbb{R}_l = \sum\limits_{\mathcal{S}_l} {\zeta _{{m_l},{k_l}}^{{i_b},{j_{b'}}}\frac{{{w_{{k_l}}}}}{2}{\log_2} \left( {1 + \alpha_{{m_l},{k_l}}^{{i_b},{j_{b'}}}p_{{m_l},{k_l}}^{{i_b},{j_{b'}}}} \right)}, \text{(bits/s/Hz)}
\end{equation}
where the notations are given in Table \ref{tab:Notation}. Additionally,
\begin{itemize}
\item $p_{{m_l},{k_l}}^{{i_b},{j_{b'}}}= p_{{m_l},{k_l}}^{{f_l}\,,({i_b},{j_{b'}})} + p_{{m_l},{k_l}}^{{R_l},({i_b},{j_{b'}})}$ is the aggregate power of user $k_l$ and relay $m_l$ over subcarrier pair (\emph{$i_b,j_{b'}$}) in the first and second hops,
\item $w_{k_l}$ represents the weight of QoS requirements for user \emph{$k_l$},
\item $p_{{m_l},{k_l}}^{{f_l}\,,({i_b},{j_{b'}})}$ denotes the transmit power of the BS $f_l$ to a given user-relay pair $(k_l,m_l)$ over subcarrier \emph{$i_b$} paired with subcarrier \emph{$j_{b'}$} in the second hop,
\item $p_{{m_l},{k_l}}^{{R_l},({i_b},{j_{b'}})}$ shows the transmit power of relay \emph{$m_l$} to user $k_l$, over subcarrier \emph{$j_{b'}$} paired with subcarrier \emph{$i_b$} in the first hop, and
\item $\alpha_{{m_l},{k_l}}^{{i_b},{j_{b'}}}$ is the equivalent channel gain for a given subcarrier pair (\emph{$i_b,j_{b'}$}) allocated to user-relay pair $(k_l,m_l)$, which is determined as
    \begin{equation} \label{equation.equ equivalent alpha}
\hspace{-2mm} \alpha^{i_b,j_{b'}}_{k_l,m_l} \triangleq \left\{
\begin{array}{l l}
\frac{{\alpha _{{f_l}{m_l}}^{{i_b}}\alpha _{{f_l}{k_l}}^{{j_{b'}}}}}{{\alpha _{{f_l}{m_l}}^{{i_b}} + \alpha _{{m_l}{k_l}}^{{j_{b'}}} - \alpha _{{f_l}{k_l}}^{{i_b}}}}, & \;  \text{relay link mode}  \\
\alpha _{{f_l}{k_l}}^{{i_b}}& \;  \text{direct link mode.}
\end{array} \right.
\end{equation}
\end{itemize}
In the direct link mode, the BS and relay powers are given by $p_{{m_l},{k_l}}^{{f_l}\,,({i_b},{j_{b'}})}  \triangleq p_{{m_l},{k_l}}^{{i_b},{j_{b'}}}$ and $p_{{m_l},{k_l}}^{{R_l},({i_b},{j_{b'}})} \triangleq 0$, respectively. $\zeta _{{m_l},{k_l}}^{{i_b},{j_{b'}}},\forall (i_b,j_{b'},k_l,m_l)$ is defined as a binary factor in order to assist mathematical discussion of the resource allocation problem. $\zeta _{{m_l},{k_l}}^{{i_b},{j_{b'}}}=1$ demonstrates that subcarrier pair $(i_b,j_{b'})$ is allocated to the user-relay pair $(k_l,m_l)$. 
\begin{table}[t]
\caption {Notations} \label{tab:Notation}
\centering
\begin{tabular}{|c||p{5cm}|}
\hline
$\mathcal{L}=\{1,...,L\}$         & Set of cells    \\ \hline
$\mathcal{B}=\{1,...,B\}$         & Set of frequency bands   \\ \hline
$\mathcal{N}=\{1,...,N\}$         & Set of subcarriers in each frequency bands    \\ \hline
$\mathcal{M}_l=\{1,...,M_l\}$       & Set of relays in cell $l$   \\ \hline
$\mathcal{K}_l=\{1,...,K_l\}$       & Set of users in cell $l$    \\ \hline
$i_b$                 & Subcarrier in the first hop over frequency bands ${b}$   \\ \hline
$j_{b'}$              & Subcarrier in the second hop over frequency bands ${b'}$   \\ \hline
$m_l$                 & Relay $m$ in cell $l$    \\ \hline
$k_l$                 & User $k$ in cell $l$    \\ \hline
$\alpha^{a}_{cc'}$    & Fading gain of channel between transmitter $c$ and receiver $c'$ over subcarrier $a$ \\ \hline
$\mathcal{S}_l$       & $\mathcal{S}_l=\{(b,b',i,j,m_l,k_l)\mid l\in \mathcal{L}; b,b'\in\mathcal{B}; i,j\in\mathcal{N}; m_l\in\mathcal{M}_l; k_l\in\mathcal{K}_l \}$    \\ \hline
\end{tabular}
\end{table}
\vspace{-3mm}
\subsection{Problem Formulation} \label{sec:Problem Formulation}
The resource allocation problem under a minimum rate requirement for each user and a total power constraint can be formalized as

\begin{subequations}
\label{eq:optimization}
\begin{align}
\label{equation.equ Objective-Function}
\mathbb{P}1: \quad \max_{\boldsymbol{(p,\zeta)}}\sum\limits_l \sum\limits_{\mathcal{S}_l} {\zeta _{{m_l},{k_l}}^{{i_b},{j_{b'}}}\frac{{{w_{{k_l}}}}}{2}{\log_2} \left( {1 + \alpha _{{m_l},{k_l}}^{{i_b},{j_{b'}}}p_{{m_l},{k_l}}^{{i_b},{j_{b'}}}} \right)} \,
\end{align}

subject to

\begin{align}\label{equation.equ first_const}
& {\sum\limits_{j,b,b',m_l,k_l} {\zeta _{{m_l},{k_l}}^{{i_b},{j_{b'}}}}} = 1, \quad \forall i,l\\\label{equation.equ second_const}
&{\sum\limits_{i,b,b',m_l,k_l} {\zeta _{{m_l},{k_l}}^{{i_b},{j_{b'}}}}} = 1, \quad \forall j,l\\
\label{equation.equ third_const}
&\sum\limits_{b,b',i,j,{m_l}} {\frac{{{w_{k_l}}}}{2}\zeta _{{m_l},{k_l}}^{{i_b},{j_{b'}}}{{\log }_2}(1 + \alpha _{{m_l},{k_l}}^{{i_b},{j_{b'}}}p_{{m_l},{k_l}}^{{i_b},{j_{b'}}})}  \ge {R_{\min }}\,\,\,\forall {k_l},l\\
\label{equation.equ forth_const}
&\sum\limits_{\mathcal{S}_l} {\zeta _{{m_l},{k_l}}^{{i_b},{j_{b'}}}}{p_{{m_l},{k_l}}^{{i_b},{j_{b'}}}}  \le {P_l}\,\,\,\,\,\,\forall l\\
\label{equation.equ fifth_const}
&p_{{m_l},{k_l}}^{{i_b},{j_{b'}}} \ge 0,\\ \label{equation.equ sixth_const}
&\zeta _{{m_l},{k_l}}^{{i_b},{j_{b'}}} \in \left\{ {0,1} \right\}\,.
\end{align}
\end{subequations}
${\sum}_{a,b,\ldots,z}$ is used to denote ${\sum}_a{\sum}_b\ldots {\sum}_z$ for simpler representation of the mathematical notations. For each $l$, $\boldsymbol{p}$ and $\boldsymbol{\zeta}$ denote the sets of nonnegative real numbers ${p_{{m_l},{k_l}}^{{i_b},{j_{b'}}}}$ and $\zeta _{{m_l},{k_l}}^{{i_b},{j_{b'}}}$, respectively. Constraints \eqref{equation.equ first_const} and \eqref{equation.equ second_const} correspond to constraints associated with \textit{exclusive} pairing of the subcarriers in the first and second hops. In other words, only one unique subcarrier $i_b$ in the first time hop is paired with subcarrier $j_{b'}$ in the second hop. Furthermore, \eqref{equation.equ forth_const} represents the total power constraint for each BS. Practically, each user in the network also has a minimum rate requirement. We consider \eqref{equation.equ third_const} as a constraint in the resource allocation problem in order to provide each user a minimum rate of $R_{min}$. Aggregate power is obtained by solving the optimization problem and then the power of each small BSs, macro BS, and relay nodes for a specific user over the corresponding subcarrier is calculated using the following equations \cite{wang2007power}

\begin{align}
\label{equation.equ Source_power}
&p_{{m_l},{k_l}}^{{f_l}\,,({i_b},{j_{b'}})} = \frac{{\alpha _{{m_l}{k_l}}^{{j_{b'}}}}}{{\alpha _{{f_l}{m_l}}^{{i_b}} + \alpha _{{m_l}{k_l}}^{{j_{b'}}} - \alpha _{{f_l}{k_l}}^{{i_b}}}}p_{{m_l},{k_l}}^{{i_b},{j_{b'}}},\\
\label{equation.equ Relay_power}
&p_{{m_l},{k_l}}^{{R_l},({i_b},{j_{b'}})} = \frac{{\alpha _{{f_l}{m_l}}^{{i_b}} - \alpha _{{f_l}{k_l}}^{{j_{b'}}}}}{{\alpha _{{f_l}{m_l}}^{{i_b}} + \alpha _{{m_l}{k_l}}^{{j_{b'}}} - \alpha _{{f_l}{k_l}}^{{i_b}}}}p_{{m_l},{k_l}}^{{i_b},{j_{b'}}}.
\end{align}
Since the optimization problem in~\eqref{eq:optimization} consists of a binary parameter $\zeta _{{m_l},{k_l}}^{{i_b},{j_{b'}}}$, solving it requires application of integer programming, which has excessive computational complexity \cite{boyd2009convex}. In order to make the problem tractable, we relax the integer factor such that it can be real values equal or greater than zero. After relaxing the constraint in~\eqref{equation.equ sixth_const}, the optimization problem in~\eqref{eq:optimization} can be rewritten as

\begin{align}\label{equation.equ relaxed objective function}
\mathbb{P}2: \quad &\max_{\boldsymbol{(p,\zeta)}}\sum\limits_l \sum\limits_{\mathcal{S}_l} {\zeta _{{m_l},{k_l}}^{{i_b},{j_{b'}}}\frac{{{w_{{k_l}}}}}{2}{\log_2} \left( {1 + \frac{{\alpha _{{m_l},{k_l}}^{{i_b},{j_{b'}}}p_{{m_l},{k_l}}^{{i_b},{j_{b'}}}}}{{\zeta _{{m_l},{k_l}}^{{i_b},{j_{b'}}}}}}\right)}\\ \notag
 &\hspace{1mm}\text{s.t.} \qquad \eqref{equation.equ first_const}, \eqref{equation.equ second_const}, \eqref{equation.equ forth_const}, \eqref{equation.equ fifth_const},\\
\label{equation.equ relaxed zeta&min_r}
\sum\limits_{b,b',i,j,{m_l}} &{\frac{{{w_{k_l}}}}{2}\zeta _{{m_l},{k_l}}^{{i_b},{j_{b'}}}{\log_2}\Bigg( {1 + \frac{{\alpha _{{m_l},{k_l}}^{{i_b},{j_{b'}}}p_{{m_l},{k_l}}^{{i_b},{j_{b'}}}}}{{\zeta _{{m_l},{k_l}}^{{i_b},{j_{b'}}}}}}\Bigg)}  \ge {R_{\min }}\,\,\,\forall {k_l},l, \\
&\text{and} \quad \zeta _{{m_l},{k_l}}^{{i_b},{j_{b'}}} \ge 0,\,\forall (i_b,j_{b'},m_l,k_l).
\end{align}

\newtheorem{lemma}{Lemma}
\begin{lemma}\label{lem:lemma1}
The objective function in $\mathbb{P}2$ is concave in $p$ and $\zeta$.\\
\end{lemma}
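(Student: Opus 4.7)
The plan is to reduce the claim to a single algebraic fact: for fixed positive constants $\alpha$ and $w$, the two-variable function
\[
h(p,\zeta) \;=\; \zeta\,\tfrac{w}{2}\log_2\!\Bigl(1+\tfrac{\alpha\,p}{\zeta}\Bigr)
\]
is jointly concave on the domain $\{p\ge 0,\ \zeta\ge 0\}$ (with the natural continuous extension $h(p,0)=0$). Once this is established, the objective in $\mathbb{P}2$ is a finite positive-weighted sum of such terms (one for each index tuple $(l,b,b',i,j,m_l,k_l)$), and since nonnegative linear combinations preserve concavity, the full objective is concave in the joint variable $(\boldsymbol{p},\boldsymbol{\zeta})$.

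The main step is to recognize $h$ as a \emph{perspective transform}. Define $f(p)=\tfrac{w}{2}\log_2(1+\alpha p)$. Since $\alpha>0$ and $\log_2(\cdot)$ is concave and increasing, $f$ is concave in $p$ (composition of a concave function with an increasing affine map). By definition, the perspective of $f$ is $g(p,\zeta)=\zeta f(p/\zeta)$ for $\zeta>0$, and a standard result (see, e.g., Boyd and Vandenberghe, Section~3.2.6) states that the perspective of a concave function is jointly concave in $(p,\zeta)$. A direct substitution shows $g(p,\zeta)=h(p,\zeta)$, so $h$ is concave on $\{\zeta>0\}$; concavity extends to the closure because $h$ is continuous on the closed half‑space and the limit $\lim_{\zeta\downarrow 0}h(p,\zeta)=0$ matches the extension.

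If a self-contained verification is preferred, one can instead compute the Hessian of $h$ explicitly. Writing $u=\alpha p/\zeta$, the partial derivatives give
\[
h_{pp}=-\tfrac{w}{2\ln 2}\cdot\tfrac{\alpha^{2}}{\zeta(1+u)^{2}},\qquad
h_{\zeta\zeta}=-\tfrac{w}{2\ln 2}\cdot\tfrac{\alpha^{2}p^{2}}{\zeta^{3}(1+u)^{2}},\qquad
h_{p\zeta}=\tfrac{w}{2\ln 2}\cdot\tfrac{\alpha^{2}p}{\zeta^{2}(1+u)^{2}}.
\]
Then $h_{pp}h_{\zeta\zeta}-h_{p\zeta}^{2}=0$ and $h_{pp}\le 0$, so the Hessian is negative semidefinite and $h$ is concave. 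This also quickly confirms that the perspective step was applied correctly.

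The only subtlety I expect is the boundary behaviour at $\zeta_{m_l,k_l}^{i_b,j_{b'}}=0$, which is where the relaxed variable can legitimately sit when a subcarrier pair is not assigned. The convention $0\cdot\log_2(1+0/0)\triangleq 0$ makes $h$ upper semicontinuous and preserves concavity after closure, which is the standard handling of perspective functions in convex optimization and is also the operational meaning in $\mathbb{P}2$ (an unassigned subcarrier contributes no rate). With this convention in place, the argument above is complete, and the same reasoning applies verbatim to the left-hand side of the rate constraint \eqref{equation.equ relaxed zeta&min_r}, confirming that the feasible set it defines is convex as well.
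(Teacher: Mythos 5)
Your proof is correct, but it takes a genuinely different route from the paper. The paper argues joint concavity of the kernel $x\log(1+y/x)$ by restricting it to an arbitrary line $y=ax+b$, checking (via the second derivative) that the resulting univariate function $q(x)=x\log\left(1+a+\tfrac{b}{x}\right)$ is concave for $x>0$, and invoking the fact that a function is concave iff all of its line restrictions are; it then sums with nonnegative weights, exactly as you do. You instead identify each term $\zeta\,\tfrac{w}{2}\log_2\!\left(1+\tfrac{\alpha p}{\zeta}\right)$ as the perspective of the concave single-variable rate $f(p)=\tfrac{w}{2}\log_2(1+\alpha p)$ and invoke the perspective theorem, backing it up with an explicit $2\times 2$ Hessian computation (your derivatives and the vanishing determinant are correct, so the Hessian is negative semidefinite). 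What each buys: the paper's line-restriction argument is shorter but leaves the univariate second-derivative check implicit and says nothing about the boundary $\zeta=0$; your version exposes the structural reason for joint concavity via a citable result in Boyd and Vandenberghe, is self-contained through the Hessian check, treats the $\zeta=0$ limit and the convention $0\cdot\log_2(1+p/0)=0$ carefully (a point the paper glosses over, though it matters since relaxed $\zeta$ can sit at zero), and observes as a bonus that the same concavity argument makes the relaxed minimum-rate constraint in $\mathbb{P}2$ define a convex feasible set, which is what ultimately justifies calling $\mathbb{P}2$ a convex program.
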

\begin{proof}
Let $q(x){=}x\log\left( {1 + y/x} \right)\left| {_{y = ax + b}} \right.{=}x\log \left( {1 + a + \frac{b}{x}} \right)$. Then, for $x>0$, $q(x)$ is a concave function (This can be demonstrated by taking its second derivative). Consequently, $x\log\left( {1 + y/x} \right)$ is concave, since its restriction to any line, i.e. $q(x)$, is concave~\cite{boyd2009convex}. Therefore, the objective function in $\mathbb{P}2$ is a concave function in that it is a nonnegative weighted sum of concave functions in the form of $x\log(1+y/x)$.
\vspace{-7mm}
\end{proof}

Since $\mathbb{P}2$, is a convex optimization problem, it can be solved by any standard method of solving convex problems. However, the value of $\zeta _{{m_l},{k_l}}^{{i_b},{j_{b'}}}$ may not be integer. Therefore, we proceed with the modified two-stage dual decomposition method as we discuss in the following. It is worth mentioning that if the number of subcarriers is adequately large, then the \textit{duality gap} of a non-convex optimization problem reduces to zero \cite{yu2006dual}.
The dual problem is given by
\begin{align} \label{equation.equ Dual-Problem}
\mathbb{P}3: \quad &\mathop {\min\limits_{\boldsymbol{\tau,\delta}}} D\left(\boldsymbol{\delta,\tau} \right) = \mathop {\min\limits_{\boldsymbol{\tau,\delta}}}\mathop {\max\limits_{\boldsymbol{p,\zeta}}} \,\,L\left( \boldsymbol{p,\zeta,\delta,\tau} \right)\\ \notag
 &\hspace{1mm}\text{s.t.} \quad  \eqref{equation.equ first_const} \;\;\;
\text{and}
\;\;\;
\eqref{equation.equ second_const},
\end{align}
where the Lagrangian is given by
\begin{align}\label{equation.equ Lagrangian} \notag
\hspace{-3mm} L(\boldsymbol{p,\zeta ,\tau ,\delta} )&= \sum\limits_l \sum\limits_{\mathcal{S}_l} {\zeta _{{m_l},{k_l}}^{{i_b},{j_{b'}}}\frac{{{w_{{k_l}}}}}{2}{\log_2} \Big( {1 + \frac{{\alpha _{{m_l},{k_l}}^{{i_b},{j_{b'}}}p_{{m_l},{k_l}}^{{i_b},{j_{b'}}}}}{{\zeta _{{m_l},{k_l}}^{{i_b},{j_{b'}}}}}}\Big)}
\\ \notag
&+\sum\limits_l {{\tau _l}\Big( {{P_l} - \sum\limits_{\mathcal{S}_l}{p_{{m_l},{k_l}}^{{i_b},{j_{b'}}}}}\Big)}
\\ \notag
&+\sum\limits_{l ,k_l} {\delta _l^{{k_l}}\bigg( \sum\limits_{b,b',{i_b},{j_{b'}},m} {\frac{w_k}{2}\zeta _{{m_l},{k_l}}^{{i_b},{j_{b'}}}}}\\
&\hspace{+0.5cm}\times {{{\log }_2}\Big(1 + \frac{{\alpha _{{m_l},{k_l}}^{{i_b},{j_{b'}}}p_{{m_l},{k_l}}^{{i_b},{j_{b'}}}}}{{\zeta _{{m_l},{k_l}}^{{i_b},{j_{b'}}}}}\Big)}  - {R_{\min }}\bigg).
\end{align}
In \eqref{equation.equ Lagrangian}, $\delta _l^{k_l}$ and $\tau_l$ are Lagrangian multipliers.
\begin{lemma}\label{lem:lemma2}
For a given $\zeta _{{m_l},{k_l}}^{{i_b},{j_{b'}}}$, the optimal power allocation that maximizes $L(\boldsymbol{p,\zeta ,\tau ,\delta} )$ is given by

\begin{equation} \label{equation.equ OptPower}
{\left( {p_{{m_l},{k_l}}^{{i_b},{j_{b'}}}} \right)^ * } = \zeta _{{m_l},{k_l}}^{{i_b},{j_{b'}}}\underbrace{{\left[ {\frac{{\left( {1 + \delta _l^{{k_l}}} \right){w_{k_l}}}}{{2{\tau _l}}} - \frac{1}{{\alpha _{{m_l},{k_l}}^{{i_b},{j_{b'}}}}}} \right]^+}}_{g_{{m_l},{k_l}}^{{i_b},{j_{b'}}}},
\end{equation}
where $[x]^+$  indicates $\max(0,x)$.\\
\end{lemma}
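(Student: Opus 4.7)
The plan is to solve a single-variable concave maximization in $p_{m_l,k_l}^{i_b,j_{b'}}$ for each fixed $\zeta$ by applying first-order KKT conditions. Lemma~\ref{lem:lemma1} already establishes that the primal objective, and hence $L(\boldsymbol{p,\zeta,\tau,\delta})$, is concave in $p$; the linear penalty terms in $\tau_l$ and the nonnegative weighting by $\delta_l^{k_l}$ preserve concavity. Thus stationarity in $p$ is both necessary and sufficient for the maximum, and I only need to carry out one derivative computation and invert the resulting equation.

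Concretely, I would first observe that $L$ is separable in the power variables once $\zeta$ is fixed: the only terms in \eqref{equation.equ Lagrangian} depending on a particular $p_{m_l,k_l}^{i_b,j_{b'}}$ are the corresponding log term (which appears both in the objective and, scaled by $\delta_l^{k_l}$, in the rate constraint) and the linear term $-\tau_l \, p_{m_l,k_l}^{i_b,j_{b'}}$ from the power budget. I would then compute
\[
\frac{\partial L}{\partial p_{m_l,k_l}^{i_b,j_{b'}}} \;=\; \bigl(1+\delta_l^{k_l}\bigr)\,\frac{w_{k_l}}{2}\cdot\frac{\alpha_{m_l,k_l}^{i_b,j_{b'}}}{1+\alpha_{m_l,k_l}^{i_b,j_{b'}}\, p_{m_l,k_l}^{i_b,j_{b'}}/\zeta_{m_l,k_l}^{i_b,j_{b'}}} \;-\; \tau_l,
\]
where I absorb the constant $\ln 2$ into $\tau_l$ as is standard in this line of work. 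Setting this to zero and solving gives $p_{m_l,k_l}^{i_b,j_{b'}}/\zeta_{m_l,k_l}^{i_b,j_{b'}} = (1+\delta_l^{k_l})w_{k_l}/(2\tau_l) - 1/\alpha_{m_l,k_l}^{i_b,j_{b'}}$, which is exactly the bracketed quantity in \eqref{equation.equ OptPower}.

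Finally, I would account for the primal feasibility constraint \eqref{equation.equ fifth_const} that $p_{m_l,k_l}^{i_b,j_{b'}}\geq 0$. Because $L$ is concave in $p$ and strictly decreasing in $p$ once the derivative becomes negative, the constrained maximum is obtained by clipping the unconstrained stationary point at zero; this is the $[\cdot]^+$ operator in \eqref{equation.equ OptPower}. Equivalently, one may introduce a multiplier for the nonnegativity constraint and use complementary slackness, arriving at the same water-filling form.

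No step looks genuinely hard: the only places care is needed are the chain rule through $\log_2(1+\alpha p/\zeta)$ (remembering to differentiate the argument with respect to $p$, not $\zeta$) and the bookkeeping to confirm that when $\zeta_{m_l,k_l}^{i_b,j_{b'}}=0$ the optimum $p_{m_l,k_l}^{i_b,j_{b'}}=0$ as well, so that the expression \eqref{equation.equ OptPower} is well-defined across all indices. Since the original problem is structurally a weighted water-filling per subcarrier pair, the final formula has the expected interpretation: $(1+\delta_l^{k_l})w_{k_l}/(2\tau_l)$ plays the role of the water level, with $\delta_l^{k_l}$ raising the level for users whose minimum-rate constraint is active.
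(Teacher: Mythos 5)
Your proposal is correct and follows essentially the same route as the paper: differentiate the Lagrangian with respect to $p_{m_l,k_l}^{i_b,j_{b'}}$ for fixed $\zeta$, set the stationarity condition from the KKT conditions to zero, solve for the water-filling level, and enforce nonnegativity via the $[\cdot]^+$ projection using constraint \eqref{equation.equ fifth_const}. Your additional remarks (absorbing the $\ln 2$ constant, concavity justifying the clipping, and the $\zeta=0$ edge case) are harmless refinements of the same argument.
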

\begin{proof}
By applying \emph{Karush-Kuhn-Tucker} (KKT) condition to the Lagrangian function, we have

\begin{align}\label{KKT} \notag
&\frac{{\partial L}}{{\partial p_{{m_l},{k_l}}^{{i_b},{j_{b'}}}}} = \sum\limits_l \sum\limits_{{S_l}} \Bigg( {\left( {1 + \delta _l^{{k_l}}} \right)\zeta _{{m_l},{k_l}}^{{i_b},{j_{b'}}}\frac{{{w_{{k_l}}}}}{2}}\\ \notag
&\hspace{2.5cm}\times \frac{{\frac{{\alpha _{{m_l},{k_l}}^{{i_b},{j_{b'}}}}}{{\zeta _{{m_l},{k_l}}^{{i_b},{j_{b'}}}}}}}{{1 + \frac{{\alpha _{{m_l},{k_l}}^{{i_b},{j_{b'}}}p_{{m_l},{k_l}}^{{i_b},{j_{b'}}}}}{{\zeta _{{m_l},{k_l}}^{{i_b},{j_{b'}}}}}}}- {\tau _l} \Bigg)=0\\\notag
&\Longrightarrow\left( {1 + \delta _l^{{k_l}}} \right)\frac{{{w_{{k_l}}}}}{2}\frac{{\alpha _{{m_l},{k_l}}^{{i_b},{j_{b'}}}}}{{1 + \frac{{\alpha _{{m_l},{k_l}}^{{i_b},{j_{b'}}}p_{{m_l},{k_l}}^{{i_b},{j_{b'}}}}}{{\zeta _{{m_l},{k_l}}^{{i_b},{j_{b'}}}}}}} = {\tau _l}\\ \notag
&\Longrightarrow\left( {1 + \delta _l^{{k_l}}} \right)\frac{{{w_{{k_l}}}}}{2}\frac{{\alpha _{{m_l},{k_l}}^{{i_b},{j_{b'}}}}}{{{\tau _l}}} - 1 = \frac{{\alpha _{{m_l},{k_l}}^{{i_b},{j_{b'}}}p_{{m_l},{k_l}}^{{i_b},{j_{b'}}}}}{{\zeta _{{m_l},{k_l}}^{{i_b},{j_{b'}}}}}\\ \notag
&\Longrightarrow {p_{{m_l},{k_l}}^{{i_b},{j_{b'}}}}  = \zeta _{{m_l},{k_l}}^{{i_b},{j_{b'}}}\left({\frac{{\left( {1 + \delta _l^{{k_l}}} \right){w_{k_l}}}}{{2{\tau _l}}} - \frac{1}{{\alpha _{{m_l},{k_l}}^{{i_b},{j_{b'}}}}}} \right)
\end{align}
Considering constraint \eqref{equation.equ fifth_const}, the power values must be positive. Therefore, \eqref{equation.equ OptPower} gives the optimal power expression.
\end{proof}
\begin{table}[t]
\begin{center}
\caption {Resource Allocation Algorithm} \label{tab:Algorithm}
\begin{tabular}{p{7cm}}
\hline
\textbf{\textbf{\emph{Algorithm of the joint resource allocation}}} \\
\hline
\emph{\textbf{1}: Initialize the Lagrangian multipliers (first iteration) and generate the channel fading gains (Alpha parameters)}.\\
\emph{\textbf{2}: Find the $(i_b,j_{b'},m_l,k_l)$ that maximize Z, and set the corresponding} $\zeta^{i_b,j_{b'}}_{m_l,k_l} =1$.\\
\emph{\textbf{3}: Find the optimal value of the power} ${p^\ast} _{{m_l},{k_l}}^{{i_b},{j_{b'}}}$ \emph{by} (\ref{equation.equ OptPower}).\\
\emph{\textbf{4}: Update Lagrangian multipliers.}\\
\emph{\textbf{5}: Iterate the above steps until all Lagrangian multipliers converge. Iteration will stop when}\\
\hspace{1cm}$\frac{{\left| {{\delta ^{(l + 1)}} - {\delta ^{(l)}}} \right|}}{{\left| {{\delta ^{(l + 1)}}} \right|}} < \varepsilon _\delta, \frac{{\left| {{\tau ^{(l + 1)}} - {\tau ^{(l)}}} \right|}}{{\left| {{\tau ^{(l + 1)}}} \right|}} < \varepsilon _\tau.$\\
\emph{\textbf{6}: End}\\
\hline
\end{tabular}
\end{center}
\vspace{-5mm}
\end{table}

\begin{lemma}\label{lem:lemma3}
The integer valued $\zeta _{{m_l},{k_l}}^{{i_b},{j_{b'}}}$ that maximizes $L(\boldsymbol{p,\zeta ,\tau ,\delta} )$ corresponds to

\begin{equation} \label{equation.equ arg-max}
{\zeta^\ast} _{{m_l},{k_l}}^{{i_b},{j_{b'}}} = \left\{
\begin{array}{l l}
1 & \hspace{1mm} \text {\emph{$\,({i_b},{j_{b'}},{m_l},{k_l})$}=$\arg {\rm{ }}\mathop {\max }\limits_{k_l,m_l,i_b,j_{b'}} {\rm{ }}Z_{{m_l},{k_l}}^{{i_b},{j_{b'}}}$} \\
0 & \hspace{2mm} \text{otherwise}.
\end{array} \right.
\end{equation}
\end{lemma}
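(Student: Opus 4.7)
The plan is to substitute the optimal power expression from Lemma~\ref{lem:lemma2} directly into the Lagrangian in~\eqref{equation.equ Lagrangian} and then observe that the resulting expression is linear in $\zeta _{{m_l},{k_l}}^{{i_b},{j_{b'}}}$. Specifically, with $p^{\ast}= \zeta\, g_{m_l,k_l}^{i_b,j_{b'}}$, the ratio $\alpha\, p^{\ast}/\zeta$ collapses to $\alpha\, g$, which is independent of $\zeta$. Hence each $\log_2(1+\alpha g)$ term enters the Lagrangian with a bare prefactor $\zeta$, and the power-budget contribution $-\tau_l p^{\ast} = -\tau_l \zeta\, g$ is also $\zeta$-linear. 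This is the algebraic miracle that the relaxation in~\eqref{equation.equ relaxed objective function} was set up to produce.

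First I would collect the terms that are independent of $\zeta$, namely $\sum_l \tau_l P_l - \sum_{l,k_l} \delta_l^{k_l} R_{\min}$, and discard them from the argmax. What remains is a pure linear form $\sum_l \sum_{\mathcal{S}_l} \zeta_{m_l,k_l}^{i_b,j_{b'}} Z_{m_l,k_l}^{i_b,j_{b'}}$, where the per-tuple coefficient should simplify to
\[
Z_{m_l,k_l}^{i_b,j_{b'}} \triangleq \bigl(1+\delta_l^{k_l}\bigr)\frac{w_{k_l}}{2}\log_2\!\bigl(1+\alpha_{m_l,k_l}^{i_b,j_{b'}} g_{m_l,k_l}^{i_b,j_{b'}}\bigr) - \tau_l\, g_{m_l,k_l}^{i_b,j_{b'}}.
\]
This identifies $Z$ as the marginal Lagrangian reward of assigning subcarrier pair $(i_b,j_{b'})$ to user--relay pair $(k_l,m_l)$ when power is already optimized.

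Next I would invoke the exclusive-pairing constraints~\eqref{equation.equ first_const}--\eqref{equation.equ second_const}: in each cell $l$, every first-hop (respectively second-hop) subcarrier appears in exactly one active tuple. Maximizing a linear functional in binary variables under such exclusivity is achieved by the greedy rule of setting $\zeta = 1$ at the feasible tuple with the largest coefficient $Z$, which is exactly the assertion~\eqref{equation.equ arg-max}.

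The main obstacle will be interpreting the single $\arg\max$ in~\eqref{equation.equ arg-max} consistently with~\eqref{equation.equ first_const}--\eqref{equation.equ second_const}, since a naive global argmax over all $(i_b,j_{b'},m_l,k_l)$ assigns only one subcarrier. I would address this by reading~\eqref{equation.equ arg-max} as an \emph{iterative} activity-selection rule: at each round the argmax is taken over tuples whose subcarriers have not yet been claimed, and after selection those subcarriers are removed from further consideration. This is precisely the procedure tabulated in Table~\ref{tab:Algorithm}. The linearity established above guarantees that at every round the greedy choice is optimal given previously fixed assignments; combined with the concavity from Lemma~\ref{lem:lemma1} and the vanishing duality gap for large $N$ cited from~\cite{yu2006dual}, this yields an integer $\zeta^{\ast}$ that attains the Lagrangian maximum.
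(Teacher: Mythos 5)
Your proposal is correct and follows essentially the same route as the paper: substitute the optimal power $p^{\ast}=\zeta\, g_{m_l,k_l}^{i_b,j_{b'}}$ into the Lagrangian, observe that the result is linear in $\zeta$ with per-tuple coefficient $Z_{m_l,k_l}^{i_b,j_{b'}}=\bigl(1+\delta_l^{k_l}\bigr)\frac{w_{k_l}}{2}\log_2\bigl(1+\alpha g\bigr)-\tau_l g$ plus a $\zeta$-independent constant $C(\tau,\delta)$, and conclude that the maximizer puts $\zeta=1$ on the tuples with the largest $Z$. Your additional remark about reading the $\arg\max$ as a per-subcarrier (iterative) selection consistent with constraints \eqref{equation.equ first_const}--\eqref{equation.equ second_const} is a sensible clarification of a point the paper leaves implicit, but it does not change the argument.
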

\begin{proof}
In order to find the optimal value of the binary factor $\zeta _{{m_l},{k_l}}^{{i_b},{j_{b'}}}$, we substitute \eqref{equation.equ OptPower} into \eqref{equation.equ Dual-Problem} and obtain

\begin{align} \label{equation.equ Dual-Function} \notag
D\left(\boldsymbol{\delta,\tau} \right) =& \mathop {\max\limits_{\boldsymbol{\zeta}}} \,\,L\left( \boldsymbol{p^*,\zeta,\delta,\tau} \right)\\
&\hspace{1mm} \text{s.t.} \quad  \eqref{equation.equ first_const} \,\,\, \text{and}\,\,\, \eqref{equation.equ second_const}.
\end{align}
Subsequently, it can be rewritten as

\begin{align} \notag
\quad &\mathop {\max\limits_{\boldsymbol{\zeta}}} L\left( \boldsymbol{p^*,\zeta ,\tau ,\delta } \right) = \mathop {\max\limits_{\boldsymbol{\zeta}}} \bigg(\sum\limits_l \sum\limits_{\mathcal{S}_l} {\zeta _{{m_l},{k_l}}^{{i_b},{j_{b'}}}} Z_{{m_l},{k_l}}^{{i_b},{j_{b'}}} \\
&\hspace{3cm}+ C\Big( {{\tau _l},\delta _l^{{k_l}}} \Big)\bigg),
\end{align}
where

\begin{align} \label{equation.equ Z}
&Z_{{m_l},{k_l}}^{{i_b},{j_{b'}}} = \frac{{\left( {1 + \delta _l^{{k_l}}} \right){w_{k_l}}}}{2}{\log _2}(1 + \alpha _{{m_l},{k_l}}^{{i_b},{j_{b'}}}g_{{m_l},{k_l}}^{{i_b},{j_{b'}}}) - {\tau _l}g_{{m_l},{k_l}}^{{i_b},{j_{b'}}},\\
\label{equation.equ C}
&C\left( {{\tau},{\delta}} \right) = \sum\limits_{l = 0}^L {{\tau _l}{P_l}}  - \sum\limits_{{k_l} = 1}^{{K_l}} {\delta _l^{{k_l}}{R_{\min }}} .
\end{align}
Since, \eqref{equation.equ C} is constant with respect to $\zeta _{{m_l},{k_l}}^{{i_b},{j_{b'}}}$, the maximum value of the Lagrangian is achieved by adopting the maximum values of $Z _{{m_l},{k_l}}^{{i_b},{j_{b'}}}$.
\end{proof}

The Lagrangian multipliers are updated via~\eqref{equation.eq Lagrangian-multiplier-delta} and~\eqref{equation.eq Lagrangian-multiplier-tau} in the next page. A brief algorithm of the joint resource allocation problem is summarized in Table~\ref{tab:Algorithm}.
\begin{figure*}[!h]
\begin{align} \label{equation.eq Lagrangian-multiplier-delta}
{\left( {{\tau _l}} \right)^{n + 1}} =& {\left( {{\tau _l}} \right)^n} - \varepsilon _\tau\left( {{P_l} - \sum\limits_{\mathcal{S}_l}{\zeta _{{m_l},{k_l}}^{{i_b},{j_{b'}}}\hspace{0.1cm}g_{{m_l},{k_l}}^{{i_b},{j_{b'}}}} } \right) \forall l,\\ \label{equation.eq Lagrangian-multiplier-tau}
 {\left( {\delta _l^{{k_l}}} \right)^{n + 1}} =& {\left( {\delta _l^{{k_l}}} \right)^n} - \varepsilon _\delta\left( {\sum\limits_{b,b',{i_b},{j_{b'}},{m_l}} {\frac{{{w_{k_l}}}}{2}\hspace{0.1cm}\zeta _{{m_l},{k_l}}^{{i_b},{j_{b'}}}\hspace{0.1cm}{{\log }_2}(1 + \alpha _{{m_l},{k_l}}^{{i_b},{j_{b'}}}\hspace{0.1cm}g_{{m_l},{k_l}}^{{i_b},{j_{b'}}})}  - {R_{\min }}} \right)\forall {k_l},l.\\
\hline
\notag
\end{align}
\vspace{-1.5cm}
\end{figure*}
\vspace{+3mm}

\subsection{Suboptimal Solution} \label{sec:Sub-optimal Solution}
The solution discussed in the previous subsection is a near-optimal solution; since, the integer factor $\zeta$ is relaxed. As mentioned, the duality gap decreases by increasing the number of subcarriers. However, the computational complexity of the problem increases as the number of subcarriers, relays, users, and the SCs increase, i.e., $O\left(KM(L+1)BN!\right)$. The computational complexity is mainly due to the search for the optimal value of the factor $\zeta$. Therefore, we provide a suboptimal greedy solution to find $\zeta$, based on the ``Greedy Iterative Activity Selection'' algorithm \cite{Thomas2001algorithms}. Generally, given a set of activities, this algorithm enables the selection of a subset of non-conflicting activities to perform within a time frame, which leads to maximizing the objective function. In this case, we treat resources as non-conflicting activities due to the unique allocation property. Subsequently, by sorting their corresponding $Z$ values, we try to add the resource indices with the highest $Z$ values to the subset of selected activities. The steps involved in the greedy algorithm are outlined in Table~\ref{tab:SuboptimalAlgorithm}.
\begin{table}[b]
\centering
\caption{Pathloss Model Parameters}
\label{tab:PathlossParam}
\begin{tabular}{c|c|c|c|c|c|}
\cline{2-6}
                            & \multicolumn{2}{c|}{$V$-band}                                      & \multicolumn{2}{c|}{$E$-band}                                      & LTE                 \\ \cline{2-6}
\multirow{-2}{*}{}          & \cellcolor[HTML]{C0C0C0}indoor & \cellcolor[HTML]{C0C0C0}outdoor & \cellcolor[HTML]{C0C0C0}indoor & \cellcolor[HTML]{C0C0C0}outdoor &                     \\ \cline{1-5}
\multicolumn{1}{|c|}{$\beta$}  & $2.5$                            & $2.2$                             & $2$                              & $2.1$                             & \multirow{-1}{*}{$2$} \\ \hline
\multicolumn{1}{|c|}{$\psi_{dB}$}   & $5.4 dB$                            & $5 dB$                               & $4.7 dB$                            & $2.1 dB$                             & $4 dB$                   \\ \hline
\multicolumn{1}{|l|}{$\gamma$} & \multicolumn{2}{c|}{$9.4$}                                         & \multicolumn{2}{c|}{$2$}                                           & $2$                   \\ \hline
\end{tabular}
\end{table}
\begin{table}[b]
\centering
\caption{Simulation Parameters}
\label{tab:SimulationParameters}
\begin{tabular}{c|c|c|c|c|}
\cline{2-5}
                                             & \multicolumn{2}{c|}{\cellcolor[HTML]{9B9B9B}Small cells}                 & \multicolumn{2}{c|}{\cellcolor[HTML]{9B9B9B}Macro cell}                  \\ \hline
\multicolumn{1}{|c|}{$P_l$}                    & \multicolumn{2}{c|}{$3 dB$}                                                & \multicolumn{2}{c|}{$16 dB$}                                               \\ \hline
\multicolumn{1}{|c|}{$R_{min}$}                 & \multicolumn{2}{c|}{$3 $ bits/s/Hz}                                          & \multicolumn{2}{c|}{$3$ bits/s/Hz}                                          \\ \hline
\multicolumn{1}{|c|}{}                       & \cellcolor[HTML]{C0C0C0}Relay link & \cellcolor[HTML]{C0C0C0}Direct link & \cellcolor[HTML]{C0C0C0}Relay link & \cellcolor[HTML]{C0C0C0}Direct link \\ \cline{2-5}
\multicolumn{1}{|c|}{\multirow{-1}{*}{$d_0(m)$}} & $[10\,\,30]$                        & $[10\,\,50]$                        &$[100\,\,300]$                    & $[50\,\,500] $                       \\ \hline
\multicolumn{1}{|c|}{$\varepsilon_\delta$, $\varepsilon_\tau$}                & \multicolumn{4}{c|}{$10^{-4}$}                                                                                                                              \\ \hline
\multicolumn{1}{|c|}{step size}                & \multicolumn{4}{c|}{$0.5/\sqrt{n}$, $n$ \text{denotes the iteration index}}                                                                                                                              \\ \hline
\end{tabular}
\end{table}
\section{Simulation Results} \label{Simulation Result}
\begin{figure}[t]
\centering
\includegraphics[width=0.45\textwidth]{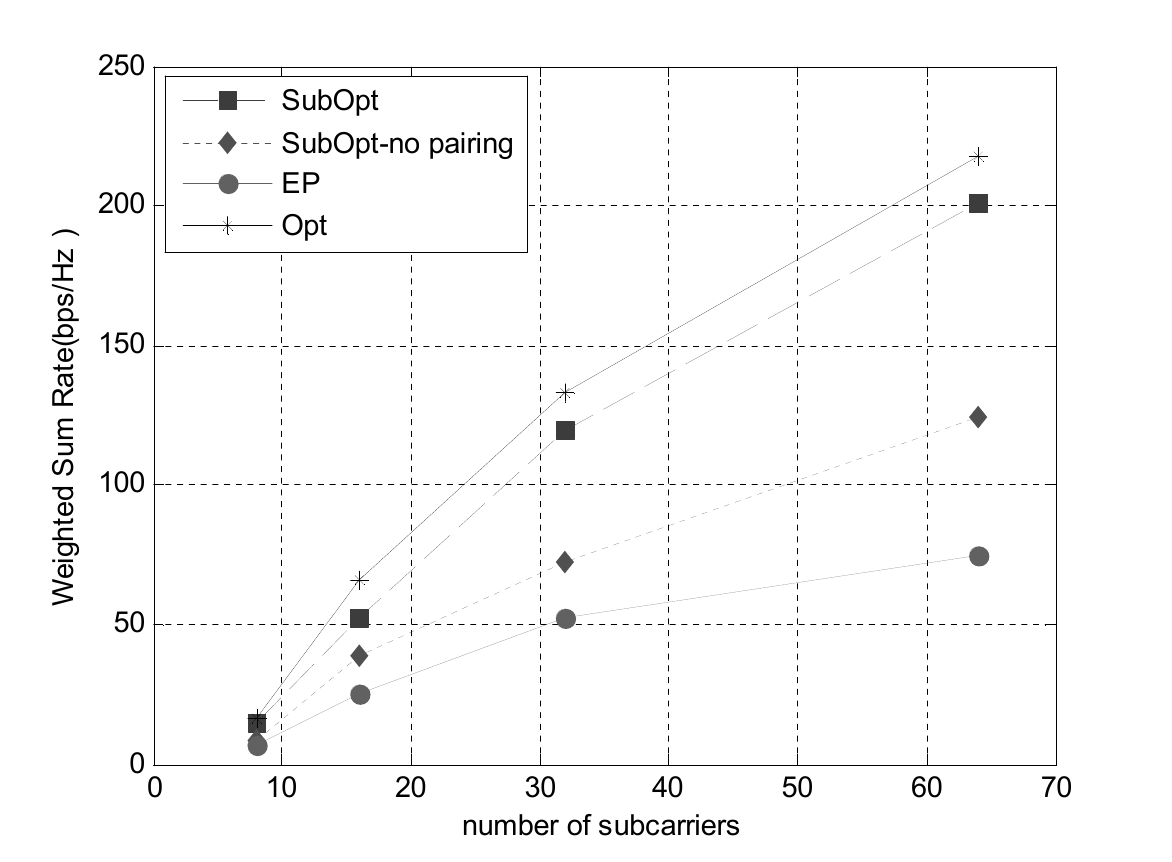}
\caption{Sum-weighted rate versus number of subcarrier for $K_l=4$ and $M_l=2$ in each cell, $L=3$ cell.}\label{SWR_N}
\end{figure}
\begin{figure}[t]
\centering
\includegraphics[width=0.45\textwidth]{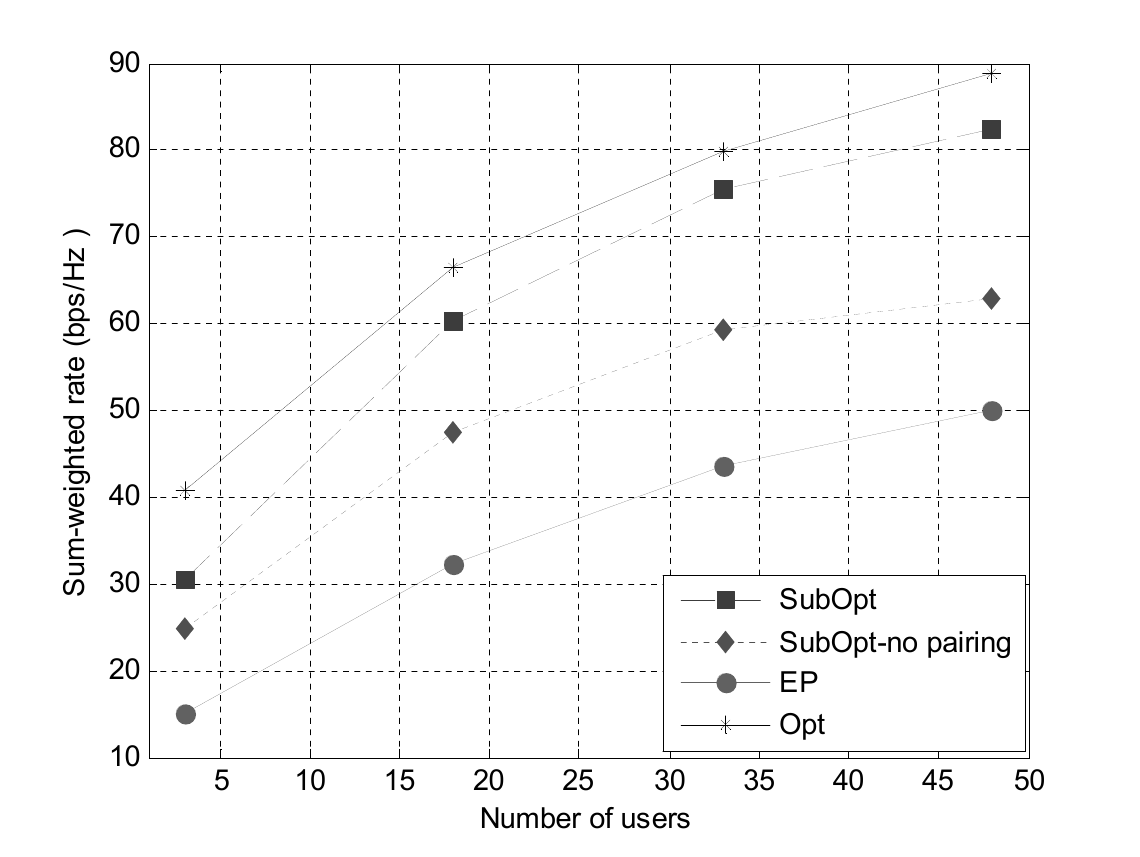}
\caption{Sum-weighted rate versus number of user for $N=15$ in each band, $M_l=5$ in each cell, $L=3$ cell.}\label{SWR_K}
\end{figure}
\begin{figure}[t]
\centering
\includegraphics[width=0.42\textwidth]{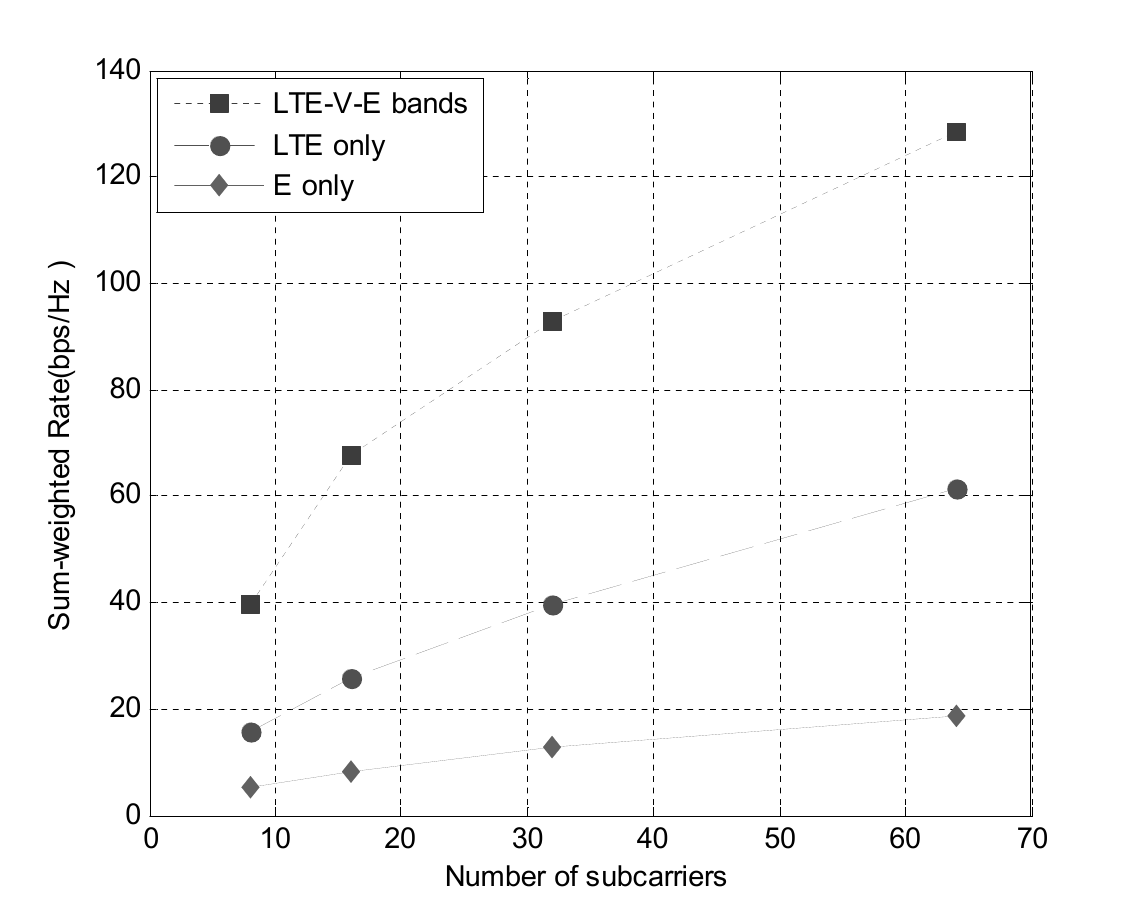}
\vspace{-.15cm}
\caption{Sum-weighted rate versus number of subcarrier for $M_l=2$, $K_l=2$ in each cell, $L=3$ cell.}\label{3V1}
\end{figure}
\begin{figure}[t]
\centering
\includegraphics[width=0.45\textwidth]{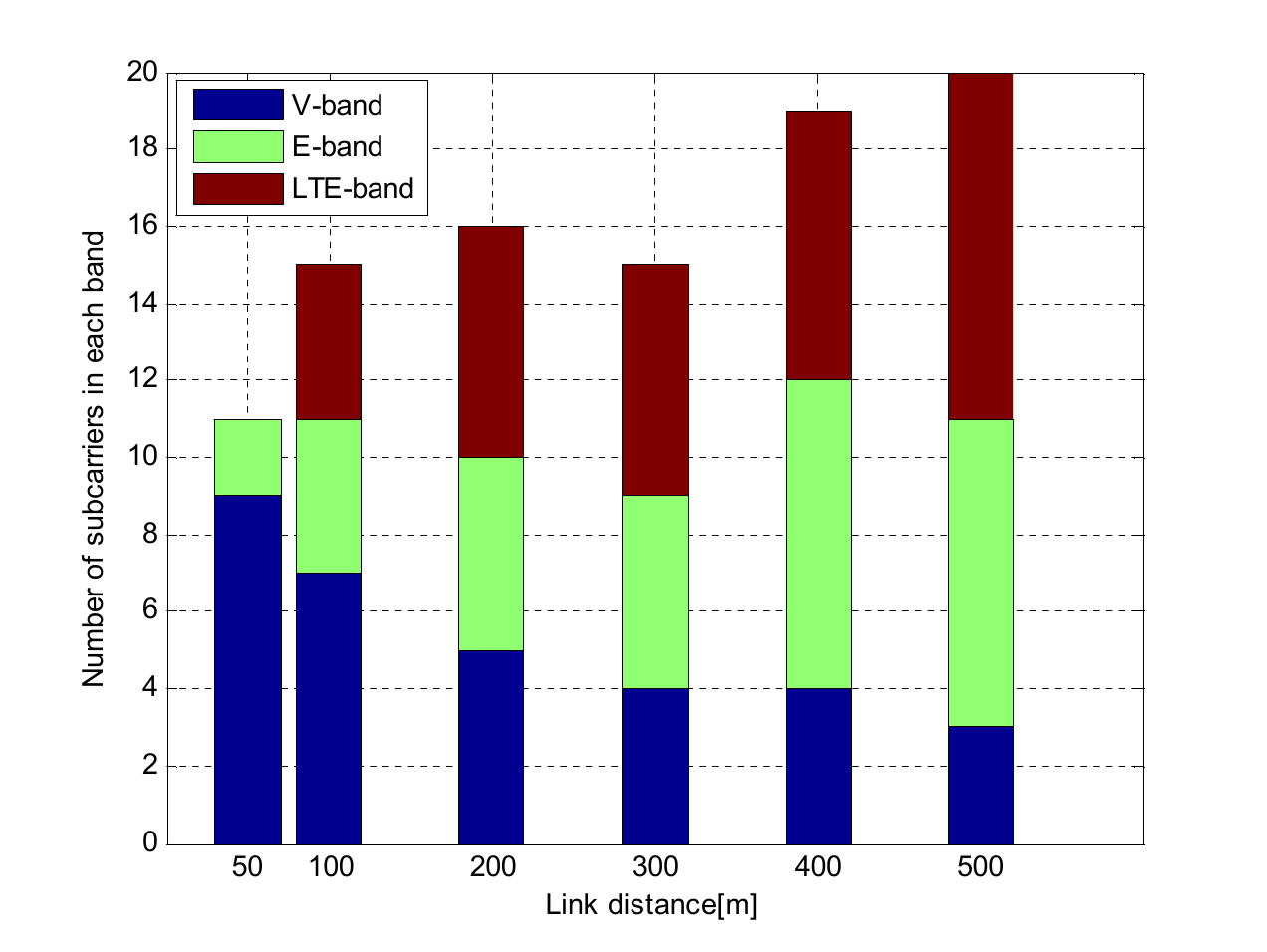}
\caption{Number of subcarriers in each band versus link distance for $M_l=2$, $K_l=2$ and $N=8$ in each cell, $L=3$ cell.}\label{NvsDisstance}
\end{figure}
In this section, we present simulation results that demonstrate the advantage of the proposed scheme in enhancing the overall sum-weighted rate of HetNets by deploying the E-band and V-band SCs and considering their specific propagation characteristics within the service area of macro BSs.
The presence of line of sight (LoS) of links is assumed given the high antenna directivity in the mmWave band. We also consider the extended large-scale path-loss model, which is dependent on the distance and frequency of operation \cite{Rappaport-14-B}, expressed as

\begin{align}
{\rm{PL}}\left( {{\rm{\emph{d}}},{\rm{\emph{f}}}} \right)\left( {{\rm{dB}}} \right){\rm{ = }}\gamma .{\rm{10lo}}{{\rm{g}}_{10}}\left( {\frac{f}{{{f_0}}}} \right){\rm{ + }}\beta .{\rm{10lo}}{{\rm{g}}_{10}}\left( {\frac{d}{{{d_0}}}} \right){\rm{ + }}{\chi _{\text{dB}}},
\end{align}
where ${{f}/{{{f_0}}}}$ and ${{d}/{{{d_0}}}}$ are the ratio of the frequency and distance deviation about the center carrier frequency and reference distance, respectively. $\beta$ and $\gamma$ are path-loss exponent and frequency-dependency factor, respectively. $\chi _{dB}$ is the shadowing factor which is a zero mean Gaussian random variable with standard deviation $\psi_{dB}$. In order to model large-scale fading, the reference distance is set to be $d_0=5$ meter (resp. $d_0=10$ meter) for the small cells (resp. macro cell). The distance between transmitters and receivers, in the small cells (resp. macro cell), are random in interval $[10\,\,50]$ and $[10\,\, 30]$ (resp. $[50\,\,500]$ and $[100\,\,300]$) meters in the direct and relay links, respectively. Although, we use a generic deployment scenario, the results here can be extended to any specific scenario for example, the ones in 3rd Generation Partnership Project (3GPP). We set the other large scale parameter, i.e., pathloss exponent to 2.5 and 2 for the 60 GHz and 70–-80 GHz bands, in the small cells (resp. macro cell), respectively~\cite[p.~106]{Rappaport-14-B}. Moreover, the shadowing effect of the channels is modeled by a zero mean Gaussian random variable with standard deviation $5.4$ dB and $4.7$ dB (resp. $5$ dB and $2.1$ dB) for the 60 GHz and 70--80 GHz bands, respectively~\cite{Rappaport-14-B}. For the LTE band, pathloss exponent and shadowing standard deviation are set to $2$ and $4$ dB, respectively~\cite{Khan2009LTE}.
\begin{table}[t]
\begin{center}
\caption {Suboptimal Solution} \label{tab:SuboptimalAlgorithm}
\begin{tabular}{p{5cm}}
\hline
\textbf{\textbf{\emph{Greedy Algorithm}}} \\
\hline
Sort $\mathcal{S}_l$ by corresponding $Z$ values, for every $l$ \\
$\mathcal{S}^\ast_l={\mathcal{S}_l(1)}$\\
$r=1$\\
$n=length(\mathcal{S}_l)$\\
for $q=2:n$\\
\quad if $Z_{\mathcal{S}^\ast_l}(q)\geq Z_{\mathcal{S}^\ast_l}(r)$\\
\quad $\mathcal{S}^\ast_l=\mathcal{S}^\ast_l \cup {\mathcal{S}^\ast_l}(q) $\\
\quad $r=q$\\
\quad end\\
end\\
\hline
\end{tabular}
\end{center}
\end{table}
In the proposed scheme, initial Lagrangian multipliers are randomly set and the step size for the subgradient method is set to $0.5/\sqrt{n}$, where $n$ denotes the iteration index. $\varepsilon_\delta$ and $\varepsilon_\tau$ are set to be $10^{-4}$. The weights $w_{k_l}$ are considered to be $w_{k_l}=1+(k_l-1)/(K_l-1), \forall k_l,l$ which is used only as an example. The minimum rate requirement for the users is set to 3 (bits/sec/Hz) and the total transmit power is
set to $3$ dB (resp. $16$ dB) for the small BSs (resp. macro BS). For clarity purposes, a list of the parameters set in the simulation are provided in Table \ref{tab:PathlossParam} and \ref{tab:SimulationParameters}. As for the performance comparison, some other comparable and related approaches of resource allocation as well as the optimal solution as the upper bound with maximum throughput and highest complexity are considered:
\begin{itemize}
\item \textbf{\emph{EP}}: The conventional equal power scheme in which power is equally allocated to all subcarriers.
\item  \textbf{\emph{SubOpt no-pairing}}: Scheme in which subcarrier pairing technique is not taken into account.
\item  \textbf{\emph{LTE only}}: Scheme where only LTE frequency band is considered.
\item  \textbf{\emph{E only}}: Scheme where only mmWave E-band is considered.
\item \textbf{\emph{SubOpt}}: The greedy solution presented in this paper.
\end{itemize}
The sum-weighted rate of the network versus the number of subcarriers with, $M_l=2$ relays and $K_l=4$ in each cell is shown in Fig.~\ref{SWR_N}. As anticipated, it can be observed that by increasing the number of subcarriers, the sum-weighted rate of the network increases for all three schemes. However, the proposed joint resource allocation outperforms the other two schemes. Moreover, the largest gain is obtained by applying power allocation; while subcarrier pairing also provides reasonable gains.

Fig.~\ref{SWR_K} shows the sum-weighted rate versus the number of users in the network, with $M_l=5$ relays in each cell and $N=15$ subcarriers in each frequency band. We observe a similar pattern to Fig.~\ref{SWR_N}, where power allocation provides the largest gain, followed by subcarrier pairing. This can primarily be attributed to the nature of mmWave channels, which are significantly affected by shadowing and pathloss. Subcarrier pairing also provides the resource allocation algorithm with flexibility to switch between frequency bands based on channel conditions. This is the main reason that the proposed algorithm outperforms the approach with no subcarrier pairing, as shown in Figs.~\ref{SWR_N} and~\ref{SWR_K}.

Fig.~\ref{3V1} demonstrates the sum-weighted rate of the network in three cases: one case in which the network utilizes the proposed scheme while applying the mentioned resource allocation and two cases in which either LTE band or E-band is used. It shows that the E-band can be effectively used to overcome significant path loss at V-band frequencies. In fact, because of the strong signal attenuation in the V-band, communication over this band is only possible over short-range distances, so, users that are far from BSs and/or relays experience poorer received SNRs, causing a reduction in data rate. By including the E-band in the resource allocation problem, however a HetNet can take advantage of lower pathloss in the E-band in order to enhance the overall sum-weighted rate of the network. Fig.~\ref{3V1} indicates the importance of utilizing various bands within the mmWave band for future HetNets. It can also be seen in Fig.~\ref{NvsDisstance}, that for the longer distances when V-band cannot provide the desired quality of service, the E-band and LTE subcarriers can be used instead. As expected, adding more resources leads to an increase in sum-weighted rate. However, this proposed scheme opportunistically utilizes an unlicensed frequency band that was previously underutilized.

\section{Conclusion and Future Works} \label{Conclusion}
This paper proposed a new scheme for utilizing mmWave frequency bands with distinct propagation characteristics in a HetNet structure. A resource allocation problem that considers utilization of E-band and V-band along with LTE band, subcarrier allocation, subcarrier pairing, and relay selection was formulated. The proposed scheme was applied to the downlink scenario of a HetNet with a macro BS coupled with small mmWave BSs. Our objective was to maximize the sum-weighted rate of the network while considering a minimum rate requirement for each user. The resulting optimization problem was solved by considering its dual form. Subsequently, a suboptimal solution was presented. Simulation results showed that our scheme outperformed conventional schemes and demonstrated that utilization of E- and V-bands can play a major role in addressing the propagation challenges at mmWave frequencies. In future work, it would be interesting to consider base stations equipped with multiple transceiver antennas and evaluate the effect of MIMO technique on the presented model. In addition, considering high pathloss and shadowing effects in mmWave frequency bands and therefore the need for highly directional beam antennas to have an acceptable link quality, obstacles and blockages in the environment have a considerable effect on the strength and quality of the received signal. Since, mmWave signals are greatly narrow and can be easily blocked even by human body-sized obstacles. Therefore, it would be important to consider the impact of the beamwidth of the antennas and the density of the obstacles on the network performance. Finally, backhaul configurations that can support the proposed scheme would also be of particular interest.

\bibliographystyle{IEEEtran}
\bibliography{IEEEabrv,GBbibfile}

\end{document}